\newtheorem{thm}{Theorem}[section]
\begin{document}

\title[Revisiting EPRL: All Finite-Dimensional Solutions]{Revisiting EPRL: All Finite-Dimensional Solutions by Naimark's Fundamental Theorem}

\author{Leonid Perlov}
\address{Department of Physics, University of Massachusetts,  Boston}
\email{leonid.perlov@umb.edu}
\author{Michael Bukatin}
\address{HERE North America LLC, Burlington, MA, USA}
\email{bukatin@cs.brandeis.edu}

\date{ March 26, 2016}

\begin{abstract}
In this paper we research all possible finite-dimensional representations and corresponding values of the Barbero-Immirzi parameter contained in EPRL simplicity constraints by using Naimark's fundamental theorem of the Lorentz group representation theory. It turns out that for each non-zero pure imaginary with rational modulus value of the Barbero-Immirzi parameter $\gamma = i \frac{p}{q}, p, q \in Z, p, q \ne 0$, there is a solution of the simplicity constraints, such that the corresponding Lorentz representation is finite-dimensional. The converse is also true - for each finite-dimensional Lorentz representation solution of the simplicity constraints $(n, \rho)$, the associated Barbero-Immirzi parameter is non-zero pure imaginary with rational modulus, $\gamma = i \frac{p}{q}, p, q \in Z, p, q \ne 0$. We solve the simplicity constraints with respect to the Barbero-Immirzi parameter and then use Naimark's fundamental theorem of the Lorentz group representations to find all finite-dimensional representations contained in the solutions. 
\end{abstract}

\maketitle

\section{Introduction}
Since we are researching the finite dimensional representations we will be using the original form of EPRL~\cite{EPRL} rather than the latest form~\cite{LinearizedEPRL}, which provides solutions only for the unitary principal series representation $(j, \gamma j), j \in Z, \gamma \in R$. Following~\cite{SU(2)Projection} we assume that the Barbero-Immirzi parameter is not a constant and can take any complex value. The most recent attempt to research the finite dimensional representations was made in~\cite{Wieland1,Wieland2}. 

The original form of the EPRL constraints~\cite{EPRL,Pereira2008,FlippedVertex,VertexAmplitude} contains the diagonal and non-diagonal constraints. The diagonal simplicity constraint provides two series of solutions: $\rho = n \gamma$ and  $\rho = -n/{\gamma}$, where $(n, \rho)$ are the Lorentz group principal series representation parameters, $\gamma$ is the Barbero-Immirzi parameter~\cite{Barbero}. Only the unitary infinite dimensional solutions of the first series were selected and researched in the original EPRL paper~\cite{EPRL} and in the new formulation of the Loop Quantum Gravity~\cite{RovelliBook,RovelliBook2,Rovelli2010,Thiemann}. The second series was always rejected and not researched on the ground of providing the complex values of the Barbero-Immirzi parameter, believed to have no physical meaning. In this paper we formally research the finite dimensional representations and the corresponding values of  Barbero-Immirzi parameter of both solution series by using Naimark's fundamental theorem of the Lorentz group representation theory.

By solving the off-diagonal simplicity constraint $n\rho = 4\gamma L^2$, ($L$ is an $SU(2)$ generator)~\cite{EPRL}, with respect to $\gamma$, we find that the first series contains the solutions $(n, \rho = n \gamma )$, and $\gamma$ can take any complex value,  while the second series solutions provides the following values for $\gamma =  \pm i (n/(|n|+2r)), \; n \in Z, n \ne 0, r= 0, 1, 2 \mbox{ ... }$ and the Lorentz representation parameters: $(n, \; \rho= \pm i(|n| + 2r))$. By using Naimark's fundamental theorem from (~\cite{Naimark} p. 295) we show that all second series solutions are necessary finite dimensional. More than that, we show that all finite dimensional representations contained in the first series are the same as in the second:  $(n, \; \rho= \pm i(|n| + 2r))$, but correspond to the inverse values of $\gamma$: $\gamma =  \pm i ((|n|+2r)/n)$.  The only common values of $\gamma$ for both series are $\gamma = \pm i$. For this special case we prove by using Naimark's fundamental theorem that if $\gamma= \pm i$ then the corresponding solutions are necessary finite dimensional and therefore non-unitary. Finally we prove the Main Theorem stating that there is a correspondence between the pure imaginary with rational modulus values of Barbero-Immirzi parameter and the finite dimensional Lorentz representation solutions of the simplicity constraints. For each non-zero pure imaginary with rational modulus value of the Barbero-Immirzi parameter $\gamma = i \frac{p}{q}, p, q \in Z, p, q \ne 0$, there is a solution of the simplicity constraints, such that the corresponding Lorentz representation is finite dimensional. The converse is also true - for each finite dimensional Lorentz representation solution of the simplicity constraints $(n, \rho)$, the associated Barbero-Immirzi parameter is non-zero pure imaginary with rational modulus, $\gamma = i \frac{p}{q}, p, q \in Z, p, q \ne 0$. 

The paper is organized as follows. In section \ref{sec:EPRL SimplicityConstraints} we review the EPRL form of the simpicity constraints and the two series of the diagonal constraint solutions.  In section \ref{sec:firstseries} we solve the off-diagonal constraint for the first series and apply the Naimark's fundamental theorem to find all finite dimensional representation contained in it and the corresponding values of the Barbero-Immirzi parameter. In section \ref{sec:secondseries} we find the solutions of the off-diagonal constraint for the second series and the corresponding values of $\gamma$ and by using the fundamental theorem show that all those solutions are finite dimensional. In section \ref{sec:specialcase} we consider the special case of the $\gamma =\pm i$ and prove that the simplicity constraints and the fundamental theorem require that the corresponding Lorentz representations be necessary finite dimensional and therefore non-unitary.  In section \ref{sec:MainTheorem} we prove the Main Theorem identifying all possible Barbero-Immirzi parameter values  corresponding to all possible finite dimensional Lorentz representation simplicity constraints solutions.  The discussion section \ref{sec:Discussion} completes the paper.
\section{EPRL Simplicity Constraints}
\label{sec:EPRL SimplicityConstraints}
The diagonal and off-diagonal simplicity constraints in EPRL form~\cite{EPRL} are
\begin{equation}
\label{diagonal}
C_2\left(1-\frac{1}{\gamma^2}\right) + \frac{2}{\gamma}C_1 = 0
\end{equation}
\begin{equation}
\label{offdiagonal}
C_2=4\gamma L^2
\end{equation}
,where $C_1 = J \cdot J$ - Casimir scalar and $C_2 = \,^*J \cdot J$ - Casimir pseudo-scalar, L - rotation generators, $\gamma$ -Barbero-Immirzi parameter. And the equality in ($\ref{diagonal}$) is a weak equality: $<\psi \, C \,  \phi> = 0$\\

For the Lorentzian theory, when using the principal series representation (not necessary unitary), the Casimir $C_1$ and the Pseudo-Casimir $C_2$ can be expressed by using the representation parameters $(n, \rho)$, where $n \in Z, \; \rho \in C$ as follows~\cite{Ruhl,Knapp,Gelfand}:
\begin{equation}
\label{Casimir}
C_1 = \frac{1}{2}(n^2  - {\rho}^2 -4)
\end{equation}
\begin{equation}
\label{PseudoCasimir}
C_2 = n\rho
\end{equation}
or if one uses a different ordering, i.e. selects the spectrum $j^2$ instead of $j(j+1)$ for the angular momentum~\cite{EPRL}, the Casimir $C_1$ takes the form (see Appendix A):
\begin{equation}
\label{Casimir1}
C_1 = \frac{1}{2}(n^2  - {\rho}^2)
\end{equation}
By substituting ($\ref{Casimir1}$) and ($\ref{PseudoCasimir}$) into ($\ref{diagonal}$) and ($\ref{offdiagonal}$) we obtain:
\begin{equation}
n \rho\left ( \gamma -\frac{1}{\gamma} \right ) = {\rho}^2 - n^2
\end{equation}
and
\begin{equation}
\label{offdiagonal1}
n \rho =4\gamma L^2
\end{equation}
The first, diagonal constraint, produces the two series of solutions:
\begin{equation}
\label{firstsolution}
\rho = n \gamma 
\end{equation}
and 
\begin{equation}
\label{secondsolution}
\rho = \frac{-n}{\gamma}
\end{equation}
Since EPRL paper~\cite{EPRL} considered only the unitary principal representations, i.e. $\gamma \in R$, only the first solution $\rho = n\gamma$ was selected, as the second solution was believed to be not suitable due to the assumption of $\gamma$ being real. As it was shown in the EPRL paper, for the first series $\rho =n\gamma$ the off-diagonal constraint selects the lowest $SU(2)$ representation in the direct sum decomposition $H_{(n, \rho)} = \bigoplus\limits_{m \ge |n|/2} H_m$ in case of real $\gamma$ and unitary representation.\\[2ex]
The second series of solutions  and the solutions of the first series corresponding to the complex-valued $\gamma$ were not researched. In this paper we do not assume  that $\gamma$ is real or that representations are unitary. Let us see in the next sections what one can obtain from the off-diagonal constraints ($\ref{offdiagonal}$) if one drops such assumptions.

\section{The First Series of  Solutions}
\label{sec:firstseries}
The first series of the solutions is  $\rho = n \gamma$. It is obtained from the diagonal constraint:
\begin{equation}
\label{diagonal2}
n \rho\left ( \gamma -\frac{1}{\gamma} \right ) = {\rho}^2 - n^2
\end{equation}
,where $ n \in Z, \rho \in C$. We consider the case $n \ne 0$. If $n = 0$, if follows from $\rho = n \gamma$ that $\rho = 0$ for any $\gamma$ and the solution $(n = 0, \rho = 0)$ corresponds to the trivial representation. 

Let us substitute the diagonal constraint solution $\rho = n \gamma$ into the non-diagonal constraint:
\begin{equation}
\label{offdiagonal2}
n \rho =4\gamma L^2
\end{equation}
and instead of $SU(2)$ generator L take its values $(\frac{|n|}{2} + s)$, where $n \in Z, n \ne 0, s = 0, 1, \mbox{...}$, then we obtain:
\begin{equation}
\label{offdiagonal3}
n^2 \gamma = 4 \gamma {\left (\frac{|n|}{2} + s \right)}^2
\end{equation}
$\gamma$ cancels on both sides and one can see that the solution is $s=0$. In other words the off diagonal constraint does not provide any restrictions on $\gamma$. The final solution of the both diagonal and off-diagonal constraints can be written as $(n, \rho = n \gamma) $, where n is an integer, $\gamma$ - any complex number. Let us find all finite dimensional solutions contained in $(n, \rho = n \gamma)$. \\[2ex]
Naimark's fundamental theorem of the Lorentz group representations theory (~\cite{Naimark} page 295) states that the principal series representation $(n, \rho)$ is finite dimensional if and only if  
\begin{equation}
\label{Maintheorem5}
\exists \;  r  = 1, 2 \mbox{...} \; \; \mbox{ , so that } \;  {\rho}^2 = - {(|n|+ 2r)}^2
\end{equation}
Since we used the Casimir from EPRL:  $C_1 = \frac{1}{2}(n^2  - {\rho}^2)$ instead of $C_1 = \frac{1}{2}(n^2  - {\rho}^2 - 4)$ corresponding to the $j^2$ spectrum instead of $j(j+1)$ we have to rewrite this condition for EPRL ordering
\begin{equation}
\label{Maintheorem4}
\exists \;  r  = 0, 1, 2 \mbox{...} \; \; \mbox{  , so that } \;  {\rho}^2 = - {(|n| + 2r)}^2
\end{equation}
It is hard to notice the difference with ($\ref{Maintheorem5}$), however $r$ now takes values starting with zero rather than one as in ($\ref{Maintheorem5}$) (see the Appendix A for details).
%\textbf{Theorem 1:} 
\begin{thm}
All first series finite dimensional non-trivial representation solutions  are of the form:
$\gamma = \pm i \left ( \frac{|n| + 2r}{n} \right )$ , $(n, \rho = \pm i(|n| + 2r)), n \in Z, n \ne 0, r = 0, 1, \mbox{ ... }$
\end{thm}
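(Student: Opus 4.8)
The plan is to combine two facts already established in the excerpt: the first-series solution $\rho = n\gamma$ (with $\gamma$ unrestricted by the off-diagonal constraint) and Naimark's finite-dimensionality criterion in the EPRL ordering, equation (\ref{Maintheorem4}), namely that $(n,\rho)$ is finite dimensional if and only if there exists $r = 0, 1, 2, \ldots$ with ${\rho}^2 = -(|n| + 2r)^2$. The strategy is simply to substitute the first-series relation into Naimark's condition and solve for the admissible $\rho$ and $\gamma$.

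First I would take the first-series solution $(n, \rho = n\gamma)$ with $n \in Z$, $n \ne 0$, and impose the finite-dimensionality condition (\ref{Maintheorem4}). This forces ${\rho}^2 = -(|n| + 2r)^2$ for some $r \ge 0$. Taking square roots gives $\rho = \pm i(|n| + 2r)$, which is exactly the claimed form for the representation parameter. I would note that since $|n| + 2r \ge |n| > 0$ for all admissible $r$, the value $\rho = 0$ (the trivial representation) is excluded, consistent with the theorem restricting to non-trivial representations.

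Next I would recover $\gamma$ from the first-series relation $\rho = n\gamma$. Solving gives $\gamma = \rho/n = \pm i(|n| + 2r)/n$, which is the claimed value. I would observe that this is manifestly non-zero pure imaginary with rational modulus $(|n|+2r)/|n|$, and that both sign choices $\pm$ are realized because $\rho = \pm i(|n|+2r)$ already carries the sign. This establishes that every finite-dimensional first-series solution has the stated form. For the converse direction within this theorem — that every pair $(n, \rho = \pm i(|n|+2r))$ with the associated $\gamma$ genuinely is a finite-dimensional first-series solution — I would verify that such $\rho$ satisfies (\ref{Maintheorem4}) by construction, and that $\rho = n\gamma$ holds for the stated $\gamma$, so the diagonal constraint is met; the off-diagonal constraint, as shown earlier in the section, imposes no further restriction.

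I do not anticipate a serious obstacle here, since the argument is essentially a direct substitution once Naimark's theorem is granted in the EPRL-ordered form (\ref{Maintheorem4}). The one point requiring care is the correct range of $r$: the shift from $r = 1, 2, \ldots$ in the standard $j(j+1)$ ordering (\ref{Maintheorem5}) to $r = 0, 1, 2, \ldots$ in the $j^2$ ordering (\ref{Maintheorem4}), which the excerpt attributes to Appendix A. I would make sure the final parametrization uses $r = 0, 1, \ldots$ so that the smallest finite-dimensional representation ($r = 0$, giving $\rho = \pm i|n|$ and $\gamma = \pm i$) is correctly included, matching the special case $\gamma = \pm i$ flagged in the introduction.
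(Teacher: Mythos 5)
Your proposal is correct and follows essentially the same route as the paper: both substitute the first-series relation $\rho = n\gamma$ into Naimark's finite-dimensionality criterion in the EPRL ordering (\ref{Maintheorem4}) and solve, the only cosmetic difference being that you solve for $\rho$ first and then recover $\gamma = \rho/n$, while the paper solves for $\gamma$ first and then computes $\rho = n\gamma$. Your added remarks — excluding the trivial representation and checking the converse inclusion explicitly — are sound refinements of the same argument rather than a different method.
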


%\textbf{Proof:}\\[2ex]
\begin{proof}
By substituting the first EPRL series solution $(n, \rho = n \gamma )$ into the finite dimensionality condition ($\ref{Maintheorem4}$) of the Naimark's fundamental theorem, we obtain:
\begin{equation}
{(n \gamma ) }^2 = -{(|n| + 2r)}^2
\end{equation}
which provides the following solution, when solved with respect to $\gamma$:
\begin{equation}
\label{gammasolution}
\gamma = \pm i \left ( \frac{|n| + 2r}{n} \right )
\end{equation}
where, $n \in Z, n \ne 0 $, $r = 0, 1, 2, \; \mbox{...}$\\
By substituting the found $\gamma$ into $\rho = n \gamma$, we find:
\begin{equation}
\rho = \pm i ( |n| + 2r) 
\end{equation}
hence we find that  
\begin{equation}
\label{firstseries6}
\forall r = 0, 1, \mbox{...} \; \;  \gamma = \pm i \left ( \frac{|n| + 2r}{n} \right ),  \; \; (n, \rho = \pm i ( |n| + 2r) )
\end{equation}
satisfy the Naimark's fundamental theorem condition.
\end{proof}
%$\square$ \\[2ex]
We note that $|\gamma| \ge 1$ and $\gamma = \pm i$, when $r=0$.

\section{The Second Series of  Solutions}
\label{sec:secondseries}
The second series solutions of the diagonal constraint:
\begin{equation}
\label{diagonal3}
n \rho\left ( \gamma -\frac{1}{\gamma} \right ) = {\rho}^2 - n^2
\end{equation}
is  $\rho = -n/\gamma$ \\[2ex]
This solution was not researched before as it provides the pure imaginary values for $\gamma$. Let us consider again the non-diagonal constraint:
\begin{equation}
\label{offdiagonal35}
n \rho =4\gamma L^2
\end{equation}
or, when substituting the parameters of $SU(2)$ generator L: $(\frac{|n|}{2} + s)$, where $s = 0, 1, \mbox{...}$ 
\begin{equation}
\label{offdiagonal4}
n \rho = 4 \gamma {\left (\frac{|n|}{2} + s \right)}^2
\end{equation}
Again we assume $n \ne  0$, otherwise we again get the trivial representation solution $(n=0, \rho = 0)$\\

After substituting $\rho = -n/\gamma$ into ($\ref{offdiagonal4}$) we obtain:
\begin{equation}
\frac{-n^2}{\gamma} = 4 \gamma {\left (\frac{|n|}{2} + s \right)}^2
\end{equation}
 when solving with respect to $\gamma$
\begin{equation}
\label{gammasolution1}
\gamma = \pm i \left ( \frac{n}{|n| + 2s} \right )
\end{equation}
where $n \in Z, n \ne 0 $, $s = 0, 1, 2 ...$ 
The values $\gamma = \mp i $ are achieved, when $s=0$. We also note that all other $\gamma$ values are pure imaginary and $|\gamma| \le 1$. It tends to zero, when $s \rightarrow \infty$.\\[2ex]
By substituting the found solution ($\ref{gammasolution1}$)  for $\gamma$  into  $\rho = -n/\gamma$ we obtain:
\begin{equation}
\rho = \pm i (|n| + 2s),  \mbox{, where   }  s = 0, 1, 2, \mbox{...}
\end{equation}
Thus the solution is
\begin{equation}
\label{newlyfoundsolution}
\gamma = \pm i \left ( \frac{n}{|n| + 2s} \right )\; , \;\;   ( \; n,  \; \rho = \pm i(|n|+2s) ), \; n \in Z
\end{equation}
We would like to emphasize that in the second series the values for $\gamma$ were obtained by solving the off-diagonal simplicity constraint, while in the first series $\gamma$ canceled on both sides of the off-diagonal constraint allowing $\gamma$ to take any complex value.
\begin{thm} 
All second series solutions correspond to the Lorenz group finite dimensional representations.
\end{thm}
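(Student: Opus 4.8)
The plan is to verify directly that every member of the second-series solution family satisfies the finite-dimensionality criterion (\ref{Maintheorem4}) of Naimark's fundamental theorem. The essential observation is that the second-series solution (\ref{newlyfoundsolution}) already delivers $\rho$ in precisely the pure-imaginary form the theorem demands, so no manipulation of the representation label is needed beyond a single squaring.

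First I would take the solution $\rho = \pm i(|n| + 2s)$ with $s = 0, 1, 2, \ldots$ derived above and square it:
\begin{equation}
\rho^2 = \bigl(\pm i(|n| + 2s)\bigr)^2 = -(|n| + 2s)^2 .
\end{equation}
Next I would compare this with the EPRL-ordered criterion (\ref{Maintheorem4}), which states that $(n, \rho)$ is finite dimensional if and only if there exists a nonnegative integer $r$ with $\rho^2 = -(|n| + 2r)^2$. Choosing $r = s$ makes the two expressions coincide identically, and since the off-diagonal index $s$ already ranges over $0, 1, 2, \ldots$, exactly the range allotted to $r$ in the $j^2$ ordering, the criterion is satisfied for every solution in the family. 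This immediately yields that each second-series solution corresponds to a finite-dimensional Lorentz representation.

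Strictly speaking there is no genuine analytic obstacle here; the whole content is the fortunate fact that solving the off-diagonal constraint for the second series forced $\rho$ into the Naimark form. The one point requiring care is the bookkeeping of the index ranges: I would confirm that the $SU(2)$ summation index $s$ entering through $L = \tfrac{|n|}{2} + s$ and the Naimark index $r$ both begin at zero under the EPRL ($j^2$) ordering, so that the identification $r = s$ is legitimate uniformly across the family, including the boundary case $s = r = 0$ (which gives $\gamma = \mp i$). With that matching established, the statement follows with no residual computation.
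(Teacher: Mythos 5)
Your proposal is correct and follows essentially the same route as the paper's own proof: square the second-series value $\rho = \pm i(|n|+2s)$, identify $r = s$ in the EPRL-ordered ($j^2$) Naimark criterion, and note that both indices range over $0, 1, 2, \ldots$ so the identification holds uniformly. Your extra remark on the boundary case $s = r = 0$ (giving $\gamma = \mp i$) is a harmless refinement of the same argument, not a different method.
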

%\textbf{Proof:}\\[2ex]
\begin{proof}
The second series solutions are of the form:
\begin{equation}
\label{newlyfoundsolution2}
\gamma =\pm  i \left ( \frac{n}{|n| + 2s} \right ), \;\;  (n, \rho = \pm i (|n|+2s) )
\end{equation}
The fundamental theorem of the Lorentz group representations theory (~\cite{Naimark} page 295) says that the principal series representation $(n, \rho)$ is finite dimensional if and only if  
\begin{equation}
\label{Maintheorem}
\exists \;  r  = 1, 2 \mbox{...} \; \; \mbox{ , so that } \;  {\rho}^2 = - {(|n|+ 2r)}^2
\end{equation}
When one selects the spectrum $j^2$ instead of $j(j+1)$ which causes the Casimir to become $C_1 = \frac{1}{2}(n^2  - {\rho}^2)$ instead of $C_1 = \frac{1}{2}(n^2  - {\rho}^2 - 4)$ this condition can be rewritten as (see the Appendix A):
\begin{equation}
\label{Maintheorem1}
\exists \;  r = 0, 1, 2 \mbox{...} \; \; \mbox{  , so that } \;  {\rho}^2 = - {(|n| + 2r)}^2
\end{equation}
where $r$ now takes values starting with zero.\\
By substituting ($\ref{newlyfoundsolution2}$) into ($\ref{Maintheorem1}$), for any $s= 0, 1, \mbox{...}$, we take $r=s$ and obtain:
\begin{equation}
-(|n|+2s)^2 = - {(|n| + 2r)}^2
\end{equation}
Thus we can see that all second series solutions satisfy condition ($\ref{Maintheorem1}$).
Therefore all newly found solutions correspond to the Lorentz group finite dimensional representations.
\end{proof}
%$\square$\\[2ex]
One can see that in both first and second series the finite dimensional representation solutions are the same: 
$(n, \pm i(|n| + 2r))$, however they correspond to the different values of Barbero-Immirzi parameter $\gamma$. In fact the values of the first series are the inverse of the second series values as one can see from the expressions ($\ref{firstseries6}$) and ($\ref{newlyfoundsolution}$). It is also interesting to notice that the second series contains only finite dimensional representation, while the first one contains both finite and infinite representations. The only common values of the both series are $\gamma=\pm i$. It is also interesting that while the second series solutions provide the values of Barbero-Immirzi $\gamma$  as the solution of the off-diagonal simplicity constraint, the first series does not provide any values for $\gamma$ as $\gamma$ cancels on both sides of the off-diagonal constraint. The values of $\gamma$ corresponding to the finite dimensional representations of the first series are obtained from the Naimark's fundamental theorem. On the contrary in the case of the second series the fundamental theorem does not select any subset of the $\gamma$ values, stating that they all are finite dimensional. \\[2ex]

\section{The Special Case of the Barbero-Immirzi $\gamma = \pm i $}
\label{sec:specialcase}
Since the value of $\gamma = \pm i$ is of a special interest and correspond physically to the self-dual and anti-self dual connections, we would like to formally prove the following theorem.
%\textbf{Theorem 3:}
\begin{thm}
If the Barbero-Immirzi parameter $\gamma = \pm i$ then the corresponding Lorentz representation solutions of the simplicity constraints are necessary finite dimensional and therefore non-unitary.
\end{thm}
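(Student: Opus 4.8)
The plan is to substitute $\gamma = \pm i$ directly into both families of diagonal-constraint solutions and show that each forces $\rho^2 = -n^2$, after which the finite-dimensionality criterion of Naimark's fundamental theorem in the EPRL ordering, equation (\ref{Maintheorem4}), is met with $r = 0$. First I would take the first series solution (\ref{firstsolution}), $\rho = n\gamma$; setting $\gamma = \pm i$ gives $\rho = \pm i n$, hence $\rho^2 = -n^2$. Next I would take the second series solution (\ref{secondsolution}), $\rho = -n/\gamma$; since $\gamma^{-1} = \mp i$ when $\gamma = \pm i$, this again yields $\rho = \pm i n$ and therefore $\rho^2 = -n^2$. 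Thus for $\gamma = \pm i$ both series collapse onto the single representation family $(n,\ \rho = \pm i n)$, consistent with the earlier observation that $\gamma = \pm i$ is the only value common to both series (the $r=0$, respectively $s=0$, member).

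The second step is to invoke finite-dimensionality. Since $\rho^2 = -n^2 = -(|n| + 2r)^2$ holds with $r = 0$, the representation $(n,\rho)$ satisfies the criterion (\ref{Maintheorem4}) of Naimark's fundamental theorem, so every such solution is finite dimensional. I would also check that the off-diagonal constraint (\ref{offdiagonal1}) is jointly compatible: substituting $\rho = \pm i n$ and $\gamma = \pm i$ into $n\rho = 4\gamma(\tfrac{|n|}{2}+s)^2$ and cancelling reduces it to $n^2 = (|n|+2s)^2$, which forces $s = 0$, i.e. $L = |n|/2$; the constraints therefore hold simultaneously and select the lowest $SU(2)$ weight. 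The degenerate case $n = 0$ gives $\rho = 0$ and the trivial representation, which I would dispose of separately.

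Finally, to obtain non-unitarity I would appeal to the standard structural fact that the Lorentz group is non-compact and semisimple, so that apart from the trivial representation it admits no finite-dimensional unitary representation; hence every non-trivial finite-dimensional solution found above is automatically non-unitary. The only step requiring any care is the second-series computation of $\rho = -n/\gamma$, where one must track the sign of $\gamma^{-1}$ correctly; everything else is a direct substitution into the already-established Naimark criterion. I do not anticipate a genuine obstacle here, since the substantive work—the classification of the finite-dimensional principal-series representations—is supplied by the cited theorem, and this statement is essentially the $r=0$ specialization of the two preceding theorems.
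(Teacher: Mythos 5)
Your proposal is correct and follows essentially the same route as the paper's own proof: substitute $\gamma = \pm i$ into both diagonal-constraint series to get $\rho = \pm i n$, then apply Naimark's finite-dimensionality criterion with $r = 0$. The extra details you supply (the off-diagonal consistency check forcing $s=0$, the $n=0$ case, and the explicit non-compactness argument for non-unitarity) are sound refinements of steps the paper leaves implicit, but do not change the approach.
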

%\textbf{Proof:}\\
\begin{proof}
If $\gamma = \pm i$ then both the first diagonal constraint solution $\rho = n \gamma$ and the second $\rho = -n/\gamma$ provide the same solution $\rho = \pm n i$, which corresponds to the principal non-unitary representations $(n, \pm n i )$. By the Lorentz group representation theory fundamental theorem (~\cite{Naimark} p.295)  these representations are necessary finite dimensional since they satisfy the representation finite dimensionality condition for $r=0$: $\exists \;  r  = 0, 1, 2 \mbox{...} \; \; \mbox{  , so that } \;  {\rho}^2 = - {(|n| + 2r)}^2$.
\end{proof}
%$\square$

\section{The Main Theorem}
\label{sec:MainTheorem}
%\textbf{The Main Theorem:} 
\begin{thm}[The Main Theorem]
Barbero-Immirzi parameter $\gamma$ is non-zero pure imaginary with rational modulus value, i.e. of the form: $\gamma = i\frac{p}{q}, p, q \ne 0$ if and only if $\gamma$ is a solution of the simplicity constraints, such that the Lorentz group representation corresponding to this solution is finite dimensional. These Lorentz group representations are described as follows: \\
If $|\gamma| \ge 1$, that is  $|p| \ge |q|$ and (($p >0$ and $k >0$)  or ($p < 0$ and $k<0$)),\\
then $n = \pm 2qk, \; \rho = \pm 2ipk$;\\[2ex]
If $|\gamma| \le 1$, that is  $|p| \le |q|$ and (($q >0$ and $k >0$)  or ($q < 0$ and $k<0$)),\\
then $n = \pm 2pk, \; \rho = \pm 2iqk$;\\[2ex]
where $\; n, k \in Z,\; n, k \ne 0$.
\end{thm}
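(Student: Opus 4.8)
The plan is to establish the biconditional in two directions and then verify the displayed lists of representations case by case. The reverse implication is almost immediate from the earlier sections: every solution of the simplicity constraints lies in one of the two diagonal series, so a \emph{finite-dimensional} solution is either of the first-series form $\gamma=\pm i(|n|+2r)/n$ (Section~\ref{sec:firstseries}) or of the second-series form $\gamma=\pm i\,n/(|n|+2s)$ (Section~\ref{sec:secondseries}). In both cases $\gamma$ is non-zero, purely imaginary, and its modulus is a quotient of two non-zero integers; writing that quotient in lowest terms as $|p|/|q|$ gives $\gamma=i\,p/q$ with $p,q\ne 0$, which is the asserted form.

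For the forward implication I would fix $\gamma=i\,p/q$ with $\gcd(|p|,|q|)=1$ and split on $|\gamma|$. Suppose first $|\gamma|\ge 1$, i.e.\ $|p|\ge|q|$. I take the first series $\rho=n\gamma=i\,np/q$ and impose Naimark's finite-dimensionality condition in the EPRL form $\rho^2=-(|n|+2r)^2$. Setting $n=\pm 2qk$ gives $\rho=\pm 2ipk$, hence $\rho^2=-(2|p||k|)^2$ and $|n|=2|q||k|$, so the condition holds with $r=|k|\,(|p|-|q|)$, a non-negative integer because $|p|\ge|q|$; by the theorem of Section~\ref{sec:firstseries} the representation $(n,\rho)$ is then finite-dimensional. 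The regime $|\gamma|\le 1$ is handled symmetrically with $p$ and $q$ interchanged: using the second series $\rho=-n/\gamma=i\,nq/p$ and $n=\pm 2pk$ one obtains $\rho=\pm 2iqk$, where finite-dimensionality is automatic by the theorem of Section~\ref{sec:secondseries} (the Naimark index being $s=|k|\,(|q|-|p|)\ge 0$).

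The step that requires genuine care is completeness: that for fixed $\gamma$ the families $n=\pm 2qk$ (resp.\ $n=\pm 2pk$) \emph{exhaust} all finite-dimensional solutions, so that the displayed lists are exact and not merely sufficient. I expect this to be the main obstacle. Coprimality of $p,q$ turns the integrality requirement $|q|\mid |n||p|$ into $|q|\mid|n|$, so one writes $|n|=|q|\ell$ and $|\rho|=\ell|p|$. Naimark's condition $|\rho|=|n|+2r$ forces $|\rho|$ to share the parity of $|n|$; under the convention that the Lorentz-group label $n$ is even (so the lowest spin $|n|/2$ in $H_{(n,\rho)}=\bigoplus_{m\ge|n|/2}H_m$ is an integer), both $|n|$ and $|\rho|$ are even, and since $\gcd(p,q)=1$ this forces $\ell$ to be even in each parity subcase ($q$ even; $q$ odd with $p$ even; $p,q$ both odd). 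Hence $\ell=2|k|$ and $n=\pm 2qk$ exactly. In parallel I would pin down the sign correlations: the hypotheses $pk>0$ (resp.\ $qk>0$), together with the independent signs on $n$ and $\rho$, are precisely what remove the redundancy among the equivalent labels $(\pm n,\pm\rho)$, so that each admissible pair is listed once.

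Finally I would check the overlap $|\gamma|=1$, i.e.\ $|p|=|q|$ and $\gamma=\pm i$, where both regimes apply. Both then collapse to $n=\pm 2k,\ \rho=\pm 2ik$ with $r=s=0$, in agreement with the special-case theorem of Section~\ref{sec:specialcase}; verifying that the two parameterizations coincide on this boundary closes the argument and yields the Main Theorem.
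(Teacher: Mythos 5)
Your two directions of the biconditional are exactly the paper's: necessity of the form $\gamma = i\,p/q$ is read off from Theorems 1 and 2 (the paper calls this direction ``straightforward''), and existence is obtained by exhibiting $n=\pm 2qk$, $\rho=\pm 2ipk$ (resp.\ $n=\pm 2pk$, $\rho=\pm 2iqk$) and checking Naimark's condition with $r=|k|\,(|p|-|q|)$ (resp.\ $s=|k|\,(|q|-|p|)$); the paper arrives at the same two lists through an eight-case sign analysis. Up to that point your proposal is correct and essentially the paper's argument.

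The genuine gap is in your completeness step. You force $\ell$ to be even, hence $n=\pm 2qk$, by invoking ``the convention that the Lorentz-group label $n$ is even.'' No such convention is available: in the paper, and in Naimark's classification, $n$ ranges over all integers, and $|n|/2$ is allowed to be a half-integer spin. Parity alone does the job only when exactly one of $p,q$ is even; when $p$ and $q$ are \emph{both odd}, the integrality requirement $r=\ell\,(|p|-|q|)/2\in\mathbb{Z}$ is satisfied for every $\ell$, odd included, and the classification you assert is false. Concretely, for $\gamma=-3i$ (so $p=-3$, $q=1$) the pair $(n,\rho)=(1,-3i)$ lies in the first series, satisfies Naimark's condition with $r=1$, and corresponds to the finite-dimensional spinor representation $(j_-,j_+)=(1/2,\,1)$, yet it is not of the form $n=\pm 2qk$. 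The same issue undermines your closing consistency check: Theorem 3 of the paper produces finite-dimensional solutions $(n,\pm ni)$ for \emph{every} nonzero integer $n$ when $\gamma=\pm i$, odd $n$ included, so the claim that both regimes ``collapse to $n=\pm 2k$'' does not agree with Theorem 3 --- it contradicts it. You are right that exhaustiveness is the delicate point, and it is worth noting that the paper does not truly establish it either: its assertion that ``$k$ should be a whole number, since $r$ is a whole number and $p$ and $q$ are mutually prime'' fails in precisely the same both-odd case, where $n(p-q)=2qr$ admits $n=qm$ with $m$ odd. But the paper's final claim is only the biconditional together with the exhibited families, which your first two steps already deliver; your attempt to prove the stronger exhaustiveness statement, by an evenness assumption the setting does not grant, is where the proposal breaks.
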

%Case: $|\gamma| \ge 1$, that is  $|p| \ge |q|$, $\; n, k \in Z,\; n, k \ne 0$\\[2ex]
%when  $p > 0$ and $ q > 0$ \\
%then $k > 0, \; \gamma = i \frac{p}{q}, \; n = \pm 2qk, \; \rho = \pm 2ipk$ \\
%when $p < 0$ and $ q < 0$ \\
%then $k < 0, \; \gamma = i \frac{p}{q}, \; n = \pm 2qk, \; \rho = \pm 2ipk$ \\
%when $p > 0$ and $ q < 0$ \\
%then $k > 0, \; \gamma = i \frac{p}{q}, \; n = \mp 2qk, \; \rho = \mp 2ipk$  \\
%when $p < 0$ and $ q > 0$ \\
%then $k < 0, \; \gamma = i \frac{p}{q}, \; n = \mp 2qk, \; \rho = \mp 2ipk$  \\[2ex]
%Case: $|\gamma| \le 1$, that is  $|p| \le |q|$, $\; n, k \in Z, \; n, k \ne 0$ \\[2ex]
%when $p > 0$ and $ q > 0$ \\
%then $k > 0, \; \gamma = i \frac{p}{q}, \; n = \pm 2pk, \; \rho = \pm 2iqk$ \\
%when $p < 0$ and $ q < 0$ \\
%then $k < 0, \; \gamma = i \frac{p}{q}, \; n = \pm 2pk, \; \rho = \pm 2iqk$ \\
%when $p < 0$ and $ q > 0$ \\
%then $k > 0, \; \gamma = i \frac{p}{q}, \; n = \mp 2pk, \; \rho = \mp 2iqk$\\
%when $p > 0$ and $ q < 0$ \\
%then $k < 0, \; \gamma = i \frac{p}{q}, \; n = \mp 2pk, \; \rho = \mp 2iqk$ \\[2ex] 
%\textbf{Proof:}\\
\begin{proof}
The proof in one direction is straightforward. It follows from the Theorem 1 and Theorem 2 that all finite dimensional solutions of the simplicity constraints of the first series are:
\begin{equation}
\gamma =\pm  i \left ( \frac{|n| + 2r}{n} \right ), \; \;  (n, \rho = \pm i(|n| + 2r) )
\end{equation}
and for the second are:
\begin{equation}
\gamma =\pm  i \left ( \frac{n}{|n| + 2r} \right ), \; \;  (n, \rho = \pm i(|n| + 2r) )
\end{equation}
where, $ n \in Z, n \ne 0 , r = 0, 1, \mbox{...} $
We can immediately see that the values of $\gamma$ in both cases are non-zero pure imaginary with the rational modulus. \\[2ex]
Now let us assume that $\gamma$ is non-zero pure imaginary with the rational modulus, i.e. of the form  $ i \frac{p}{q} $ , where $p, q \in Z, p, q \ne  0 $  and find all finite dimensional representation solutions of the simplicity constraints corresponding to it. 
For convenience let us consider non reducible $\frac{p}{q}$ and multiply by 2 the numerator and denominator  $ \frac{2p}{2q} i$, and find n and r for each simplicity constraints solution, when expressed via p and q. We will consider the following eight cases that cover all situations.  The two first series solutions $ \gamma = \pm i \left ( \frac{|n| + 2r}{n} \right )$ with plus and minus $i$ in cases of $n > 0$ and $n < 0$. This will provide four cases. The same for the second series solutions we consider four cases: $ \gamma =   \pm i \left ( \frac{n}{|n| + 2r} \right )$ for $\gamma$ with $\pm i$ for $n >0$ and $n < 0$. Altogether there are eight cases. \\[2ex]
\textbf{Case 1:}  The first series of solutions  with plus $i$ and $n > 0$ : $ \gamma = i \left ( \frac{|n| + 2r}{n} \right )$
\begin{equation}
\gamma = i \frac{2p}{2q} =  i \frac{n + 2r}{n}
\end{equation}
We find 
\begin{equation}
n = 2qk,  \; \; r = (p - q)k, \; k \in Z, k \ne 0 
\end{equation}
Since $ n \in Z, n > 0, \; r = 0, 1 \mbox{...} $ it follows that either
\begin{equation}
\mbox{Case 1a:  } k > 0, \; q > 0, \; p > 0, \;  |p| \ge |q|
\end{equation}
or
\begin{equation}
\mbox{Case 1b:  } k < 0, \; q < 0, \;  p < 0, \; |p| \ge |q|
\end{equation}
We explain the logic for the Case1a, all other cases are very similar:\\
It is easy to see, that since in Case1a $n > 0$ and $k > 0$ then from $n =2qk$ it follows that $q > 0$. Then, since $r = 0, 1, \mbox{...}$, it follows from $r = (p - q)k$,  that $p > 0$ and $ |p| \ge |q|$ Also it follows from $r = (p - q)k$ that $k$ should be a whole number, since $r$ is a whole number and $p$ and $q$ are mutually prime. \\[2ex]
\textbf{Case 2:}  The first series of solutions with  $-i$ and $n>0$:  $ \gamma =  - i \left ( \frac{|n| + 2r}{n} \right )$
\begin{equation}
\gamma = i\frac{2p}{2q} =  -i \frac{n + 2r}{n}
\end{equation}
We find 
\begin{equation}
n = - 2qk,  \; \; r = (p + q)k
\end{equation}
Since $ n \in Z, n > 0, \; r = 0, 1 \mbox{...} $ it follows that either
\begin{equation}
\mbox{Case 2a:  } k > 0, \; q < 0, \; p > 0;  |p| \ge |q|
\end{equation}
or
\begin{equation}
\mbox{Case 2b:  } k < 0, \;  q > 0,\; p < 0, \;  |p| \ge |q|
\end{equation}
\textbf{Case 3:}  The first series of solutions  with plus $i$ and $n < 0$ : $ \gamma = i \left ( \frac{|n| + 2r}{n} \right )$
\begin{equation}
\gamma = i\frac{2p}{2q} =  i\frac{2r-n}{n}
\end{equation}
We find 
\begin{equation}
n = 2qk,  \; \; r = (p + q)k
\end{equation}
Since $ n \in Z, n < 0, \; r = 0, 1 \mbox{...} $ it follows that either
\begin{equation}
\mbox{Case 3a:  } k > 0, \;  q < 0, \;  p > 0, \;   |p| \ge |q|
\end{equation}
or
\begin{equation}
\mbox{Case 3b:  } k < 0, \; q > 0, \;  p < 0, \; |p| \ge |q|
\end{equation}
\textbf{Case 4:} The first series of solutions with  $-i$ and $n<0$:  $ \gamma =  - i \left ( \frac{|n| + 2r}{n} \right )$
\begin{equation}
\gamma = i\frac{2p}{2q} =  -i \frac{2r-n}{n}
\end{equation}
We find 
\begin{equation}
n = -2qk,  \; r = (p - q)k
\end{equation}
Since $ n \in Z, n < 0, \; r = 0, 1 \mbox{...} $ it follows that either
\begin{equation}
\mbox{Case 4a:  } k > 0, \; q > 0, \; p > 0, \; |p| \ge |q|
\end{equation}
or
\begin{equation}
\mbox{Case 4b:  } k < 0, \;  q < 0, \; p < 0, \; |p| \ge |q| 
\end{equation}
\textbf{Case 5:}  The second series of solutions  with $-i$ and $n> 0$ : $ \gamma =  - i \left ( \frac{n}{|n| + 2r} \right )$
\begin{equation}
\gamma = i\frac{2p}{2q} =  -i \frac{n}{n + 2r}
\end{equation}
We find 
\begin{equation}
n = -2pk,  \; \; r = (p + q)k
\end{equation}
Since $ n \in Z, n > 0, \; r = 0, 1 \mbox{...} $ it follows that either
\begin{equation}
\mbox{Case 5a:  } k > 0, \; p < 0 , \; q > 0,\; |p| \le |q|
\end{equation}
or
\begin{equation}
\mbox{Case 5b:  } k  < 0, \; p > 0, \; q < 0, \; |p| \le |q|
\end{equation}
\textbf{Case 6:}  The second series of solutions  with $i$ and $n > 0$: $ \gamma =  i \left ( \frac{n}{|n| + 2r} \right )$
\begin{equation}
\gamma = i\frac{2p}{2q} =  i\frac{n}{n + 2r}
\end{equation}
We find 
\begin{equation}
n = 2pk,  \; \; r = (q - p)k 
\end{equation}
Since $ n \in Z, n > 0, \; r = 0, 1 \mbox{...} $ it follows that either
\begin{equation}
\mbox{Case 6a:  } k > 0, \; p > 0 , \; q > 0, \; |p| \le |q|
\end{equation}
or
\begin{equation}
\mbox{Case 6b:  } k < 0, \; p < 0 , \; q < 0, \; |p| \le |q|
\end{equation}
\textbf{Case 7:}  The second series of solutions  with $-i$ and $n < 0$: $ \gamma =  - i \left ( \frac{n}{|n| + 2r} \right )$
\begin{equation}
\gamma = i\frac{2p}{2q} =  -i \frac{n}{2r - n}
\end{equation}
We find 
\begin{equation}
n = -2pk,  \; \; r = (q - p)k
\end{equation}
Since $ n \in Z, n < 0, \; r = 0, 1 \mbox{...} $ it follows that either
\begin{equation}
\mbox{Case 7a:  } k > 0, \; p > 0 ,  \; q > 0, \; |p| \le |q|
\end{equation}
or
\begin{equation}
\mbox{Case 7b:  } k < 0, \; p < 0 ,  \; q < 0, \; |p| \le |q|
\end{equation}
\textbf{Case 8:}  The second series of solutions  with $i$ and $n<0$: $ \gamma =   i \left ( \frac{n}{|n| + 2r} \right )$
\begin{equation}
\gamma = i\frac{2p}{2q} =  i\frac{n}{2r-n}
\end{equation}
We find
\begin{equation}
n = 2pk,  \; \; r = (q + p)k
\end{equation}
Since $ n \in Z, n < 0, \; r = 0, 1 \mbox{...} $ it follows that either
\begin{equation}
\mbox{Case 8a:  } k > 0, \; p < 0 , q > 0,  |p| \le |q|
\end{equation}
or
\begin{equation}
\mbox{Case 8b:  } k < 0, \; p > 0 , q < 0,  |p| \le |q|
\end{equation}
As we can see the Cases 1-4 cover the values of |p| > |q| for all combinations of p and q. This is expected as $|\gamma| \ge 1$ for the first series as it follows from Theorem 1. The Cases 5-8 cover the values $|q|>|p|$ for all combinations of p and q. It is also expected as $|\gamma| \le 1$ for the second series as it follows from Theorem 2. \\[2ex]
We rewrite it once again in a different form to show that for each pair $(p, q)$, where $p, q \in Z, p, q \ne 0$ one finds the $\gamma, k, n, \rho$, to be the parameters of the corresponding finite dimensional solutions of the simplicity constraints. \\[2ex]
By getting the solution values for $n$ from the Cases 1-8 and by recalling that for the Cases 1- 4 $\rho = n\gamma$,  and for Cases 5-8 $\rho = -\frac{n}{\gamma}$, we can express $\rho$ via $q, p$ and $k$: \\[2ex]
$|\gamma| \ge 1$, that is  $|p| \ge |q|$, $\; n, k \in Z, n, k \ne 0 $ \\[2ex]
when  $p > 0$ and $ q > 0$ \\
then $k > 0, \; \gamma = i \frac{p}{q}, \; n = \pm 2qk, \; \rho = \pm 2ipk$ (Case 1a, 4a)\\
when $p < 0$ and $ q < 0$ \\
then $k < 0, \; \gamma = i \frac{p}{q}, \; n = \pm 2qk, \; \rho = \pm 2ipk$ (Case 1b, 4b)\\
when $p > 0$ and $ q < 0$ \\
then $k > 0, \; \gamma = i \frac{p}{q}, \; n = \mp 2qk, \; \rho = \mp 2ipk$  (Case 2a, 3a)\\
when $p < 0$ and $ q > 0$ \\
then $k < 0, \; \gamma = i \frac{p}{q}, \; n = \mp 2qk, \; \rho = \mp 2ipk$  (Case 2b, 3b)\\[2ex]
$|\gamma| \le 1$, that is  $|p| \le |q|$,  $\;n, k \in Z, n, k \ne 0$\\[2ex]
when $p > 0$ and $ q > 0$ \\
then $k > 0, \; \gamma = i \frac{p}{q}, \; n = \pm 2pk, \; \rho = \pm 2iqk$ (Case 6a, 7a)\\
when $p < 0$ and $ q < 0$ \\
then $k < 0, \; \gamma = i \frac{p}{q}, \; n = \pm 2pk, \; \rho = \pm 2iqk$ (Case 6b, 7b)\\
when $p < 0$ and $ q > 0$ \\
then $k > 0, \; \gamma = i \frac{p}{q}, \; n = \mp 2pk, \; \rho = \mp 2iqk$ (Case 5a, 8a)\\
when $p > 0$ and $ q < 0$ \\
then $k < 0, \; \gamma = i \frac{p}{q}, \; n = \mp 2pk, \; \rho = \mp 2iqk$ (Case 5b, 8b)\\

We can see that the above cases can be written in the following compact form:\\
If $|\gamma| \ge 1$, that is  $|p| \ge |q|$ and (($p >0$ and $k >0$)  or ($p < 0$ and $k<0$)),\\
then $n = \pm 2qk, \; \rho = \pm 2ipk$;\\[2ex]
If $|\gamma| \le 1$, that is  $|p| \le |q|$ and (($q >0$ and $k >0$)  or ($q < 0$ and $k<0$)),\\
then $n = \pm 2pk, \; \rho = \pm 2iqk$;\\[2ex]
where $\; n, k \in Z,\; n, k \ne 0$.\\

We have proved that for each $p$ and $q$,  where $p, q  \in Z, p, q\ne 0$, 
 $\gamma = \pm i\frac{p}{q}$, there is a solution of the simplicity constraints, such that the corresponding Lorentz group representation is finite dimensional. And for each finite dimensional Lorentz representation solution of the simplicity constraint the corresponding $\gamma$ is the solution and is necessary of the form $i\frac{p}{q}$.
\end{proof}
%$\square$

\section{Discussion}
\label{sec:Discussion}
The main result of this paper is the Main Theorem stating that for each non-zero pure imaginary with rational modulus value of the Barbero-Immirzi parameter $\gamma = i \frac{p}{q}, p, q \in Z, p, q \ne 0$, there is a solution of the simplicity constraints, such that the corresponding Lorentz representation is finite dimensional. The converse is also true - for each finite dimensional Lorentz representation solution of the simplicity constraints $(n, \rho)$, the associated Barbero-Immirzi parameter is non-zero pure imaginary with rational modulus, $\gamma = i \frac{p}{q}, p, q \in Z, p, q \ne 0$.  \\[2ex]
In this paper we have found and researched all possible finite dimensional Lorentz representation solutions of the EPRL simplicity constraints. We used Naimark's fundamental theorem of the Lorentz representations to find all such representations. Instead of rejecting the second solution of the simplicity constraints $\rho = -n/\gamma$  on the ground of Barbero-Immirzi parameter being complex, we have researched and solved it with respect to the Barbero-Immirzi parameter. The result of the paper shows that the finite dimensional representations for both the first series $\rho = n \gamma$  and the second series: $\rho = -n/\gamma$ are the same:  $(n, \pm i(|n| + 2r))$ but correspond to the different values of $\gamma$. For the first series the  $\gamma$ values are $\gamma = \pm i \left ( \frac{|n| + 2r}{n} \right )$ , with $|\gamma| \ge 1$, while for the second the inverse of this expression: $\gamma = \pm i \left ( \frac{n}{|n| + 2r} \right )$ and $|\gamma| \le 1$. We have also proved in the Theorem 3, that if $\gamma = \pm i$, then the corresponding Lorentz group representation solutions $(n, \rho)$ are necessary finite dimensional and therefore non-unitary. The Main Theorem completes the paper.\\[2ex]

\section{Appendix A: Lorentz Group Representations Casimir and Pseudo-Casimir}
\label{sec:appendix_A}
The Lorentz group finite dimensional spinor representations are contained in the non-unitary principal series representations. They are usually parametrized by two half-integer spins $(j_-, j_+)$. 
Then the Casimir ($\ref{Casimir}$) and pseudo-Casimir ($\ref{PseudoCasimir}$), when expressed via the spins become:
\begin{equation}
\label{casimirs}
C_1 = 4(j_+( j_+ +1) + j_-( j_-+1))  \quad  C_2 = -4i( j_+(j_++1) - j_-(j_-+1) )  
\end{equation}
The  fundamental theorem in Naimark's book (`\cite{Naimark} p 295)  provides the expression for the connection between the spins $(j_+, j_-)$ and the principal series parameters $(n, \rho)$:
\begin{equation}
\label{spins2}
2j_+ = \frac{n}{2} + \frac{i \rho}{2} -1 \quad
2j_- =  -\frac{n}{2} + \frac{i \rho}{2} -1
\end{equation}
by adding and subtracting it follows that:
\begin{equation}
n = (2j_+ - 2j_-),   \quad
i\rho = ( 2j_+ + 2j_- +2)
\end{equation}
The condition for the representation being finite can be written following~\cite{Naimark} p 295 as:
\begin{equation}
\label{ChangingOrder1}
\rho = -i (|n| + 2r), \;\; r = 1, 2 \mbox{...}
\end{equation}
It is important to note that the values of $r$ begin with 1 rather than with zero, when one selects the $j(j+1)$ spectrum instead of $j^2$. 
The Casimir and Pseudo-Casimir are then as follows, which can be checked explicitly by using the expressions above:
\begin{equation}
C_1 = \frac{1}{2} \left ( n^2 - {\rho}^2 -4 \right ) = 4(j_+( j_+ +1) + j_-( j_-+1))
\end{equation}
\begin{equation}
C_2 = n\rho = -4i( j_+(j_++1) - j_-(j_-+1) )  
\end{equation}
When we select the spectrum as in \cite{EPRL} , i.e  $j^2$  instead of $j(j+1)$ all the formulas above change in the following way:\\
\begin{equation}
\label{casimirs1}
C_1 = 4({j_+}^2+{ j_-}^2)  \quad  C_2 = -4i( {j_+}^2 - {j_-}^2 )  
\end{equation}
\begin{equation}
\label{spins}
2j_+ = \frac{n}{2} + \frac{i \rho}{2} \quad
2j_- =  -\frac{n}{2} + \frac{i \rho}{2}
\end{equation}
where $\rho$ now for the finite dimensional representations is:
\begin{equation}
\label{ChangingOrder}
\rho = -i (|n| + 2r), \;\; r = 0, 1, 2 \mbox{...}
\end{equation}\\
It is very important to note the the values of $r$ now begin from zero, rather than from $1$ as it was in ($\ref{ChangingOrder1}$), when the spectrum was $j(j+1)$. The whole purpose of this Appendix is to show how the $r$ spectrum changes, when one changes the spin spectrum from $j(j+1)$ to $j^2$. \\[2ex]
From ($\ref{spins}$) one can also see that:
\begin{equation}
\label{nro}
n = (2j_+ - 2j_-),   \quad
i\rho = ( 2j_+ + 2j_-)
\end{equation}
and the Casimir and Pseudo-Casimir when expressed in $(n, \rho)$ become:
\begin{equation}
C_1  = 4 ({j_+}^2 + {j_-}^2) = \frac{1}{2} \left ( n^2 - {\rho}^2 \right )
\end{equation}
\begin{equation}
C_2  = -4i({j_+}^2 - {j_-}^2) = n\rho
\end{equation}
Of course the spectrums of $j_-$ and $j_+$ do not change when we change the ordering.  However the correspondence between $(j_-, j_+)$ and $(n, \rho)$ changes as it is seen from ($\ref{spins}$) and ($\ref{spins2}$). Particularly, when the ordering is $j^2$, the solution $(n = 0, \rho = 0)$ corresponding to $(j_-=0, j_+ = 0)$ is a finite dimensional of the dimension 1, corresponding to the trivial representation. At the same time, when the ordering is $j(j+1)$, the same solution $(j_- =0, j_+ = 0)$ corresponds to $(n=0, \rho = -2i)$.

\end{document}